\newtheorem{theorem}{Theorem}
\newtheorem{definition}[theorem]{Definition}
\newtheorem{lemma}[theorem]{Lemma}
\numberwithin{theorem}{section}
\newcommand{\floor}[1]{\left\lfloor #1 \right\rfloor}
\newcommand{\ceil}[1]{\left\lceil #1 \right\rceil}
\newcommand{\eps}{\varepsilon}
\newcommand{\Oh}{\mathcal{O}}
\global\long\def\brq#1{\left[ #1 \right]}
\global\long\def\pr#1{\mathbb{P}\brq{#1}}
\global\long\def\E#1{\mathbb{E}\brq{#1}}
\DeclareMathOperator*{\argmin}{arg\,min}
\newenvironment{full-version}{}{}
\newcommand{\g}[2]{\mathcal{G}(#1,#2)}
 \def\rem#1{{\marginpar{\raggedright\scriptsize #1}}}
\newcommand{\micr}[1]{\rem{\textcolor{blue}{$\bullet$ #1}}}
\newcommand{\ord}{\rho}
\newcommand{\parent}{f_\ord}
\title{Connected Components at Scale via Local Contractions}
\date{}
\author{Jakub Łącki\\
    Google Research\\
    New York, USA\\
    \texttt{jlacki@google.com} \\
    \And 
    Vahab Mirrokni\\
    Google Research\\
    New York, USA\\
    \texttt{mirrokni@google.com} \\
    \And 
     Michał Włodarczyk\\
     University of Warsaw\\
     Warsaw, Poland\\
     \texttt{m.wlodarczyk@mimuw.edu.pl}
}
\begin{document}

\maketitle
\begin{abstract}
As a fundamental tool in hierarchical graph clustering, computing connected components has been a central problem in large-scale data mining. While many known algorithms have been developed for this problem, they are either not scalable in practice or lack strong theoretical guarantees on the parallel running time, that is, the number of communication rounds. So far, the best proven guarantee is $\Oh(\log n)$, which matches the running time in the PRAM model. 

In this paper, we aim to design a distributed algorithm for this problem that works well in theory and practice. In particular, we present a simple algorithm based on contractions and provide a scalable implementation of it in MapReduce. On the theoretical side, in addition to showing $\Oh(\log n)$ convergence for all graphs, we prove an $\Oh(\log \log n)$ parallel running time with high probability for a certain class of random graphs. We work in the MPC model that captures popular parallel computing frameworks, such as MapReduce, Hadoop or Spark.

On the practical side, we show that our algorithm outperforms the state-of-the-art MapReduce algorithms. To confirm its scalability, we report empirical results on graphs with several trillions of edges.
\end{abstract}

\section{Introduction}
Given the popularity and ease-of-use of MapReduce-like frameworks, developing practical algorithms with good theoretical guarantees for basic graph algorithms is of great importance. An algorithm runs efficiently in a MapReduce framework if it executes few rounds, has low communication complexity, and good load-balancing properties. These features are reflected in the Massively Parallel Computation (MPC) model~\cite{mpc-model, mrc-soda, mpc-goodrich}, which provides an abstraction on the popular parallel computing platforms such as MapReduce, Hadoop, or Spark.

Designing efficient MPC algorithms for graph problems has received considerable attention in recent years. The ultimate goal is usually to give an algorithm that runs in $o(\log n)$ rounds, while keeping the space per machine $\Oh(n)$\footnote{We use the standard notation to denote the numbers of vertices and egdes respectively by $n$ and $m$.}~\cite{mpc-matching,mpc-matching2, mpc-vc, mpc-mis}.

While several algorithms and techniques have been developed for computing connected components in a distributed setting, they have shortcomings from theoretical or practical standpoint. For example, some of previously studied practical algorithms satisfy suboptimal theoretical guarantees~\cite{cc-beyond} or lack scalability, requiring each connected component to be small enough to fit in memory of each machine~\cite{hash-to-min}. The best proven guarantee for the number of rounds is $\Oh(\log n)$, which matches the running time in the PRAM model. However, the $\Oh(\log n)$ bound on round complexity is unsatisfactory in relation to practical applications for two reasons:
\begin{itemize}
\item Even a trivial distributed algorithm for finding connected components runs in $\Oh(d)$ rounds, where d is the diameter of a graph (one example is Hash-Min~\cite{hash-to-min}). On the other hand, in the real-world graphs we often have $d \approx \log n$, which makes the $\Oh(\log n)$ bound as good as the trivial $\Oh(d)$ bound.
\item The $\Oh(\log n)$ bound does not explain the practical performance of existing algorithms, as they run in just a few rounds even on graphs with billions of vertices.
\end{itemize}

In this paper, we aim to design scalable distributed algorithms for MPC model with improved theoretical guarantees and practical implementations in a MapReduce-like framework.

\subsection{Our contributions}
We introduce a new simple distributed algorithm for finding connected components, that we call LocalContraction. By taking advantage of its simple definition, we can easily show it converges in $\Oh(\log n)$ rounds and the communication in each round is only $\Oh(m)$. More notably, we can show that for random graphs, a slightly modified version of LocalContraction converges in only $\Oh(\log \log n)$ rounds, even though the diameter of a graph is $\Omega\left({{\log n} \over {\log\log n}}\right)$ with high probability. This gives a theoretical explanation for the very good practical round complexity of the algorithm observed on real-world graphs.

We note that under a popular conjecture~\cite{cycle-conjecture} achieving $o(\log n)$ rounds for general graphs is impossible. The conjecture states that even distinguishing between a graph that is a cycle on $2n$ vertices and a graph consisting of two cycles of lengths $n$ requires $\Omega(\log n)$ rounds, if the space per each machine is $\Oh(n^{1-\epsilon})$ and the total space of all machines is $\Oh(m)$. 

Moreover, we evaluate the algorithm empirically and run it successfully on several large-scale graphs, including a graph with 854 billion vertices and 6.5 trillion edges, the largest graph that has ever been tested by distributed connected components algorithms. In almost all experiments, LocalContraction turns out to be faster than the best known MapReduce algorithm for computing connected components~\cite{cracker}. While the theoretical upper bound on the total communication complexity of our algorithm is $\Oh(m \log n)$, the experiments indicate that for real-world graphs the communication complexity is only $\Oh(m)$.
In our experiments, in each phase of the algorithm the number of edges that are processed decreases at least 10 times.
We believe that this is one of the reasons of the good practical performance of the algorithm.

\subsection{Related work}

As a fundamental problem in large-scale data mining, computing connected components has been studied extensively in the 
literature~\cite{icde-paper, hash-to-min, cc-beyond, cracker, conf/pkdd/SeidlBF12,wsdm18}.
To the best of our knowledge, none of previously studied (practical) algorithms provide a sublogarithmic round complexity (for a general class of graphs). 
In fact, it has been recently shown that under certain complexity assumptions, achieving $\Oh({1\over \epsilon})$ rounds is impossible~\cite{mpc-lb}, if the space per machine is limited to $\Oh(n^{\epsilon})$. 

As for algorithms with $\Oh(\log n)$ number of rounds, algorithms have been developed even in CRCW and CREW  PRAM models~\cite{Shiloach82,krishnamurthy94,kargerNP99:cc}. While CRCW algorithms can be ported in MapReduce to give a $\Oh(\log{n})$ round algorithm in theory~\cite{mrc-soda}, they require the original graph to be persisted and used in each graph iteration.  CREW PRAM algorithms can be simulated in MapReduce using the result of~\cite{mrc-soda}, but they would require $\Oh(n^2)$ communication per MapReduce iteration on a star graph. Previous work in~\cite{hash-to-min} compared the state of the art $\Oh(\log{n})$ time algorithm with a simpler algorithm called Hash-to-Min, and showed that the latter outperforms the former despite the theoretical guarantees. 

Distributed algorithms achieving $o(\log n)$ round complexity either require $n^{1+\Omega(1)}$ space per machine~\cite{filtering} (which is often an unrealistic assumption) or only apply to very special classes of graphs and are very hard to implement in MapReduce~\cite{geometric}.

The best currently known practical MapReduce algorithm for computing connected components is Cracker~\cite{cracker}.
It has been shown to require $\Oh(\log n)$ rounds, but the best known upper bound on the amount of communication is as high as $\Oh(mn / \log n)$, which is very far from what one observes in experiments.

Recently two algorithms for finding connected components in the BSP model~\cite{bsp} have been proposed~\cite{wsdm18, icde-paper}.
The BSP paradigm is used by recent distributed graph processing systems like Pregel~\cite{pregel} and Giraph~\cite{giraph}. While BSP is generally considered more efficient for graph processing, in congested grids, where fault-tolerance against preemptions is more important, MapReduce has certain advantages~\cite{cc-beyond}.
The BSP algorithms can usually be relatively easily implemented in MapReduce, but the fact that MapReduce reshuffles the entire graph between machines in each round means that MapReduce implementations of BSP algorithms do not perform well in practice.

\subsection{Organization of the paper}
We begin with introducing the notation and the theoretical model in Section~\ref{sec:prelim}.
Then we introduce two algorithms based on contraction in Section~\ref{sec:algos}.
Section~\ref{sec:main} comprises main guarantees on the algorithms' behavior for general graphs and  Section~\ref{sec:random} contains the proof of $\Oh(\log\log n)$ convergence for random graphs.
In Section~\ref{sec:exp} we present an empirical evaluation of our algorithms on real-world graphs.
Finally, in Section~\ref{sec:lb} we complete the theoretical analysis of algorithms by giving $\Omega(\log n)$ lower bounds on round complexity in general graphs.

\section{Preliminaries}\label{sec:prelim}
Throughout the paper we use $G$ to denote a graph, $V(G)$ to denote its set of vertices, $E(G)$ to denote its edges, $n = |V(G)|$ and $m = |E(G)|$.
For a vertex $v \in V(G)$ we define $N(v)$ to be the set of all neighbors of $v$ in $G$ together with $v$.
For a set $U \subseteq V(G)$ the set $N(U)$ is a union of $(N(v))_{v\in U}$.
Moreover, by $E(X, Y)$ we denote a set of edges that have one endpoint in $X$ and the other endpoint in $Y$.
We define \emph{weakly connected components} of a directed graph $G$ to be connected components of a graph $G'$ obtained from $G$ by making each edge undirected.


For an equivalence relation $r \subseteq V \times V$ we define the \emph{contraction of $G$ with respect to $r$}, denoted $G / r$, to be a graph obtained from $G$ by merging vertices $G$ that are in relation $r$.
We refer to vertices of $G / r$ as \emph{nodes} to avoid confusion.

\subsection{Massively Parallel Computation Model}

The class MPC($\eps$) is parameterized by a \emph{space exponent} $\eps \in [0,1]$.
Let $N$ denote the data size in bits and $p$ be the number of machines.
At the beginning the data is divided over $\frac{N}{p}$ machines
and in a single round each machine can perform arbitrary computations over its data.
Then the machines exchange information between each other with a restriction that each machine can receive at most $\Oh(\frac{N}{p^{1-\eps}})$ bits in total in a single round.
Therefore the model allows data replication of order $\Oh(p^\eps)$.
The main measure of effectiveness is the number of computation-communication rounds of the algorithm.

There are some differences in defining the model details among various authors.
For example, \cite{mpc-model} allow unlimited computational power of each machine, whereas \cite{affinity} require them to run in polynomial time and space proportional to the local input size.
Our algorithms work in the stronger model with $\eps = 0$ since the time and space complexities are nearly linear with respect to the local input size.

In some of our algorithms we additionally extend the MPC model with a distributed hash table (key-value store)~\cite{bigtable}.
In each round all other machines can send messages of total size $\Oh(n)$ that define the stored key-value pairs.
In the following round, all machines can query the distributed hash table a total of $\Oh(n)$ time, and for each query the value corresponding to a key is returned immediately.

\section{Our algorithms}\label{sec:algos}

We use the term \emph{phase} to denote a logical part of an algorithm, and \emph{round} to denote a single MapReduce computation.
Both algorithms sample a random ordering of vertices in the beginning of each phase.
This is implemented by assigning each vertex a random hash chosen uniformly from $[0, 1]$ -- this induces a random ordering on each connected component independently and we will focus on a single  component in the analysis.
Note that with this approach we can only compare the priorities of vertices and we cannot access their exact positions in the ordering.

\paragraph{LocalContraction}
Each phase starts with sampling a random ordering $\rho : V(G) \rightarrow [n]$.
Then, each vertex $v$ computes a \emph{label} $\ell_\rho(v)$, which is the smallest priority assigned to one of vertices in $N(N(v))$.
Finally, we merge vertices with the same label to a single node and obtain a smaller graph, which is used as an input to the following phase.
Technically speaking, the process of merging is not necessarily a contraction, since two vertices may have the same label even if there is no edge connecting them.
The procedure terminates when each connected component gets reduced to a single node, or, equivalently, the resulting graph has no edges.

\paragraph{TreeContraction}
Similarly to LocalContraction, at the beginning of each phase each vertex receives a random priority.
Let $\parent(v)$ be the vertex with the lowest hash among $N(v) \setminus \{v\}$.
This mapping induces a directed graph $H$.
Assume $r$ is a relation such that two vertices are in relation $r$ if and only if they are in the same weakly connected component of $H$.
In one phase, TreeContraction contracts $G$ with respect to $r$.

Observe that this time each contracted node is guaranteed to originate from a connected subgraph, so we obtain a minor of the original graph.
Again, the following phase processes the contracted graph and we terminate when each connected component gets reduced to a single node.

\begin{lemma}
The contraction step in both algorithms can be implemented in MPC(0) in $\Oh(1)$ rounds.
\end{lemma}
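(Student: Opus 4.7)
The plan is to express each contraction phase as a small constant number of MapReduce primitives, each of which is a map-shuffle-reduce step running in one round of MPC($0$) with $\Oh(m)$ communication. The contracted graph is represented as the edge list obtained by replacing every vertex of $G$ with its canonical class identifier.

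For LocalContraction, the phase is implemented in three rounds. First, for every edge $\{u,v\}$ I emit the pairs $(u,\rho(v))$ and $(v,\rho(u))$; the reducer at key $u$ computes $m(u) := \min_{w \in N(u)} \rho(w)$. Second, I repeat the same propagation with $m$ in place of $\rho$, which gives $\ell_\rho(u) = \min_{w \in N(u)} m(w) = \min_{w \in N(N(u))} \rho(w)$. Third, every edge $\{u,v\}$ is emitted under the key $\{\ell_\rho(u), \ell_\rho(v)\}$, and the reducer outputs each distinct key once, producing the edge set of the contracted graph. The per-vertex fan-out is at most $\deg(v)$, so each round fits the MPC($0$) memory bound on average, giving $\Oh(1)$ rounds overall.

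For TreeContraction, the analogous decomposition is: a min-propagation round computes $\parent(v)$; further rounds determine, for each vertex $v$, the canonical identifier of the weakly connected component of $H$ containing $v$; and a final relabel-and-deduplicate round produces the contracted edge set. The structural fact I would prove first is that every cycle of $H$ has length exactly $2$: along any directed path $v_0 \to v_1 \to v_2 \to \cdots$ in $H$, since $v_i \in N(v_{i+1}) \setminus \{v_{i+1}\}$ and $v_{i+2}$ is defined as the minimum-priority vertex in that set, one has $\rho(v_{i+2}) \le \rho(v_i)$; hence the subsequences of priorities at even and odd indices are strictly decreasing, ruling out cycles of length $\ge 3$. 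Each WCC of $H$ is therefore a pseudotree rooted at a unique $2$-cycle $\{r_1, r_2\}$, which is detected in one round by the local predicate $\parent(\parent(v)) = v$, and its canonical label is taken to be $\min(r_1, r_2)$.

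The main obstacle is to disseminate this label down each pseudotree in $\Oh(1)$ MPC rounds. I would handle it via pointer jumping combined with the distributed hash-table extension of the MPC model described in Section~\ref{sec:prelim}: parent pointers are stored as key--value pairs, and in each round every vertex simultaneously queries and updates $\parent(v) \gets \parent(\parent(v))$ using $\Oh(n)$ total lookups. A constant number of such rounds suffices under the invariant that the pseudoforests $H$ arising in a single phase have bounded depth (the same random-ordering properties exploited in Section~\ref{sec:random} ensure this in the regimes of interest); in the pessimistic case the pointer-jumping version still completes in $\Oh(\log n)$ rounds and is absorbed in the overall round budget of the algorithm. A final relabeling shuffle, identical to step three of LocalContraction, then produces the contracted graph, completing one phase.
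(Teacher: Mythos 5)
The part of your argument that addresses what the lemma actually claims is correct and coincides with the paper: given the labels, emit each edge keyed by the pair of endpoint labels, deduplicate at the reducers, and output the contracted edge list --- one map--shuffle--reduce step, hence $\Oh(1)$ rounds with $\Oh(m)$ communication. The paper's proof is exactly this (plus a practical remark about splitting an oversized neighborhood across machines while replicating only the label), and your two-round min-propagation for computing $\ell_\rho$ in LocalContraction is likewise fine; the paper dismisses it as straightforward.

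Where you diverge is in folding the computation of the TreeContraction labels into this lemma. The paper deliberately scopes the lemma to the contraction step alone, assuming labels are already assigned, and defers the TreeContraction label computation to Theorem~\ref{thm:tree-rounds} --- precisely because that computation is \emph{not} $\Oh(1)$ rounds in plain MPC(0): the pseudotrees defined by $\parent$ have depth $\Theta(\log n)$ (Lemma~\ref{lem:depth}), so pointer jumping costs $\Theta(\log\log n)$ rounds per phase, and the $\Oh(1)$ bound is recovered only in the extended model with a distributed hash table, where a single vertex can issue $\Oh(\log n)$ \emph{adaptive} queries within one round and simply walk up its parent chain --- a stronger use of the hash table than your one-jump-per-round scheme. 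Your hedge (``a constant number of rounds suffices under a bounded-depth invariant \ldots\ in the pessimistic case $\Oh(\log n)$ rounds, absorbed in the overall budget'') concedes that the statement, as you read it, is not actually established for TreeContraction. Your structural observation that every cycle of $H$ has length two is correct, but it is the content of Lemma~\ref{lem:cycle} (with the WCC characterization in Lemma~\ref{lem:wccs}) and belongs to the analysis of Theorem~\ref{thm:tree-rounds}, not to this lemma.
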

\begin{full-version}
\begin{proof}
We assume that we have already assigned labels to vertices -- for LocalContraction it is straightforward and for TreeContraction it is described in~Theorem~\ref{thm:tree-rounds}.
The vertices are then assigned to machines without replication.
In case a neighborhood of some vertex is too large to fit into single machine's memory, it is divided among several machines and only the label gets replicated.

For each vertex $v$ and edge $uv$ a mapper sends a key-value pair $(u,\ell(v))$, where $\ell(v)$ stands for the label of $v$.
These messages are grouped again by vertices and the label mapping is applied to $u$.
In the end a new edge $(\ell(u),\ell(v))$ is generated and potential duplicates are being removed in a standard way.
\end{proof}
\end{full-version}

\section{Main properties of the algorithms}\label{sec:main}



In this section we show that both algorithm behave in a sensible way on any input.
For LocalContraction the argument for convergence relies on a fact that each vertex has a big chance of seeing a neighbor with low priority.

\begin{lemma}
LocalContraction terminates after $\Oh(\log n)$ phases with high probability.
\end{lemma}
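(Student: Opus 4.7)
The plan is to show that every non-trivial connected component of the working graph shrinks by a constant factor in expectation per phase, and then lift this to a high-probability bound via concentration over the $\Oh(\log n)$ phases.

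Fix a phase and a component $C$ with $k = |V(C)| \geq 2$; after the phase $C$ becomes a graph on $|\ell_\rho(V(C))|$ nodes. A useful starting point is the formula
\[
  \pr{\ell_\rho(u) = \ell_\rho(v)} \;=\; \frac{|B_2(u) \cap B_2(v)|}{|B_2(u) \cup B_2(v)|}
\]
for any pair $u, v \in V(C)$, valid because the labels agree iff the $\rho$-minimum of $B_2(u) \cup B_2(v)$ falls inside the intersection. For an edge $uv$, the intersection already contains $N(u) \cup N(v) \supseteq \{u, v\}$, so this probability is strictly positive.

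To bound $\E{|\ell_\rho(V(C))|}$, I would fix any spanning tree $T$ of $C$ and let $X = |\{uv \in E(T) : \ell_\rho(u) = \ell_\rho(v)\}|$. Since label equality is transitive, vertices connected by a path of agreeing tree edges share a label, so the number of distinct labels is at most the number of connected components of the agreement subgraph of $T$, which equals $k - X$. Thus $\E{|\ell_\rho(V(C))|} \leq k - \sum_{uv \in E(T)} \pr{\ell_\rho(u) = \ell_\rho(v)}$. If each tree-edge collision probability is at least an absolute constant $c > 0$, then $\E{|\ell_\rho(V(C))|} \leq (1-c)k + c \leq (1 - c/2)\, k$ for $k \geq 2$, giving the desired per-phase shrinkage factor $\gamma := 1 - c/2 < 1$.

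Given the per-phase expected shrinkage, the remainder is routine: Markov's inequality yields that in any single phase the component size drops by factor $\gamma' \in (\gamma, 1)$ with probability at least some constant $c' > 0$ conditional on the current graph, and a Chernoff-style argument using the fresh random ordering sampled in each phase shows that within $\Oh(\log n)$ phases enough good phases occur to reduce $|V(C_t)|$ to $1$ with probability $\geq 1 - n^{-\Omega(1)}$; a union bound over the $\leq n$ components finishes.

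The \emph{main obstacle} is securing a constant lower bound on tree-edge collision probabilities. The ratio $|B_2(u) \cap B_2(v)| / |B_2(u) \cup B_2(v)|$ can be small when $u, v$ have very different 2-neighborhoods (a leaf adjacent to a hub with a huge 2-ball is the canonical bad case). To handle such imbalance, one either picks $T$ with care (e.g., a BFS tree from the lowest-priority vertex) or augments the argument: a vertex with a huge $B_2$ already pulls many of its 2-hop neighbors into the same label class, so the component still shrinks by a constant factor through a separate mechanism that can be charged against the hub. Balancing these two regimes into a single clean $\gamma < 1$ is where the bulk of the technical effort lies.
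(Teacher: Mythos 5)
Your reduction from ``few distinct labels'' to ``many agreeing spanning-tree edges'' is sound (the identity $\pr{\ell_\rho(u)=\ell_\rho(v)}=|B_2(u)\cap B_2(v)|/|B_2(u)\cup B_2(v)|$ is correct, and the count $k-X$ of agreement components is a valid upper bound on the number of labels), but the argument then rests entirely on a uniform constant lower bound for the per-edge collision probability, and that bound is false. You flag this yourself as the ``main obstacle,'' and it is not a technicality one can wave away: take a two-level star (a center $c$ adjacent to $\sqrt n$ hubs, each hub adjacent to $\sqrt n$ leaves). For the tree edge between $c$ and a hub $v$, the two-ball of $c$ is the whole graph while the two-ball of $v$ has only $\Oh(\sqrt n)$ vertices, so the collision probability is $\Oh(1/\sqrt n)$, and no choice of spanning tree avoids edges of this type. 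The compensating ``hub already absorbs many vertices'' mechanism you gesture at is exactly the missing content of the proof; as written, the per-phase shrinkage factor $\gamma<1$ is not established, so the proposal has a genuine gap. (The concluding Markov/Chernoff step would be fine if the shrinkage were in hand, though it is more machinery than needed: iterating the conditional expectation gives $\E{X_k}\le \gamma^k n$ directly, and a single application of Markov's inequality to $\pr{X_k\ge 1}$ already yields the high-probability claim.)

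The paper's proof avoids pairwise collision probabilities altogether by exploiting monotonicity of labels along the priority order: since $v\in N(N(v))$, every vertex satisfies $\rho(\ell_\rho(v))\le\rho(v)$, so all $n/2$ vertices in the lower half of the ordering already receive labels from the lower half; and every vertex $w$ in the upper half with a fixed neighbor $u$ satisfies $\rho(\ell_\rho(w))\le\rho(u)$, where $\rho(u)$ falls below $n/2$ with probability at least $1/2$. Hence in expectation at least $n/4$ of the upper-half vertices also receive lower-half labels, giving at most $3n/4$ distinct labels in expectation regardless of the degree structure. This global counting argument is what neutralizes the hub/leaf imbalance that defeats the edge-by-edge approach; if you want to salvage your route, you would need to replace the per-edge constant by an amortized charging scheme, which is essentially reinventing the paper's argument in a harder form.
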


\begin{full-version}
\begin{proof}
Recall that in each phase of the algorithm, each vertex $v$ computes a label $\ell_\rho(v)$ and the vertices with equal labels are merged.
We show that the expected number of distinct labels computed in each step is at most $\frac{3n}{4}$.

Consider a vertex $w$ satisfying $\rho(w) \ge \frac{n}{2}$.
As long as the component of $w$ has not been reduced into a single node, we can fix a neighbor $u$ of $w$.
The value of $\rho(u)$ is a uniformly random variable from $[1, n] \setminus \{ \rho(w) \}$.
In particular $\pr{\rho(u) < \frac{n}{2}} \ge \frac{1}{2}$.
Let $Z_w$ be a binary random variable indicating whether $\rho(\ell_\rho(w)) < \frac{n}{2}$.
We can see that $\E{Z_w} \ge \frac{1}{2}$ and, by the linearity of expectation, $\E{\sum_{v : \rho(v) \ge \frac{n}{2}} Z_v} \ge \frac{n}{4}$.

We can now bound the number of distinct labels computed in the first phase, which we denote by $X_1$.
Since $\rho(\ell_\rho(v)) \le \rho(v)$, vertices $v$ with $\rho(v) \leq \frac n2$ are assigned labels that are not greater than $\frac n2$.
Moreover, in expectation at least $\frac n4$ vertices with $\rho(v) \leq \frac n2$ are assigned labels not greater than $\frac n2$.
This implies $\E{X_1} \le \frac{3n}{4}$.

Let $X_k$ denote the number of vertices after $k$ phases of the algorithm or $0$ if the algorithm has terminated.
We have $\E{X_k} \le \left(\frac {3} {4}\right)^k n$ and for $k = (1+c)\cdot \log_{\frac{4}{3}} n$ the expected number of vertices is at most $n^{-c}$.
By Markov's inequality we get $\pr{X_k \ge 1} \le n^{-c}$, what finishes the proof.
\end{proof}
\end{full-version}

Since all operations within each phase of LocalContraction are local, the above lemma immediately implies the following.

\begin{theorem}\label{thm:local-ub}
LocalContraction can be implemented to run in $\Oh(\log n)$ MapReduce rounds with high probability.
\end{theorem}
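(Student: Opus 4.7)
The plan is to combine the two lemmas already established in this section: the previous lemma shows that LocalContraction needs only $\Oh(\log n)$ phases with high probability, and the separate lemma about the contraction step shows that the merging at the end of each phase costs only $\Oh(1)$ MPC rounds. So what remains is essentially to argue that everything a single phase does, outside the contraction step itself, also fits into $\Oh(1)$ MPC(0) rounds.

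First I would describe the per-phase implementation explicitly. Sampling the random ordering $\rho$ is purely local: every machine assigns each of its vertices an independent uniform hash in $[0,1]$, which takes one round. Next, each vertex $v$ must compute $\ell_\rho(v)$, the minimum priority over $N(N(v))$. This is a standard two-hop aggregation: in one round every vertex $v$ sends its own priority to each neighbor, so each $u$ learns $\min_{v \in N(u)} \rho(v)$; in a second round each $u$ forwards this local minimum to all its neighbors, and each $v$ sets $\ell_\rho(v)$ to the minimum of the values it receives. This yields exactly the minimum over the 2-neighborhood in $\Oh(1)$ rounds. Finally, the merge step is handled by the already-proved contraction lemma in $\Oh(1)$ rounds.

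The only subtlety is making sure all of these steps respect the MPC(0) restrictions (space per machine nearly linear in local input, per-round communication bounded by total edge count). High-degree vertices are the natural worry: a vertex whose neighborhood does not fit on one machine cannot directly gather minima from all neighbors. I would handle this exactly as in the contraction lemma, by splitting such a vertex across several machines while replicating only its priority/label. Since we are aggregating a commutative minimum, we can compute partial minima on each shard and then reduce them in an additional $\Oh(1)$ rounds. Under MPC(0) with linear total space, this is straightforward and does not inflate communication beyond $\Oh(m)$ per round.

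Putting the pieces together, each phase uses a constant number of MPC rounds, and the previous lemma guarantees that with high probability the algorithm terminates after $\Oh(\log n)$ phases; multiplying gives the claimed $\Oh(\log n)$ MapReduce round bound. I do not expect any real obstacle here — the main statement is essentially a bookkeeping corollary — so the only step worth being careful about is verifying that the two-hop minimum aggregation can be carried out without violating the per-machine space bound, which is why I would explicitly address the high-degree case.
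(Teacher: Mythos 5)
Your proposal is correct and follows essentially the same route as the paper: the paper derives the theorem directly from the preceding lemma on the $\Oh(\log n)$ phase count, noting that all operations within a phase are local (and invoking the earlier lemma that the contraction step takes $\Oh(1)$ MPC(0) rounds). Your writeup simply makes explicit the two-hop minimum aggregation and the handling of high-degree vertices, which the paper leaves implicit.
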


To bound the number of phases executed by TreeContraction it suffices to observe that all contracted sets contain at least two vertices.

\begin{lemma}\label{lem:tree-phases}
TreeContraction terminates after $\Oh(\log n)$ phases.
\end{lemma}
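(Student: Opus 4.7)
The plan is to argue that each phase of TreeContraction halves the number of vertices in every not-yet-finished connected component, yielding a deterministic $\Oh(\log n)$ bound (no high-probability language is needed, unlike in the LocalContraction lemma).

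First I would fix a single connected component $C$ of the current graph at the start of some phase and assume $|C| \ge 2$, since components of size $1$ are already done. For every vertex $v \in C$, the set $N(v) \setminus \{v\}$ is nonempty, so $\parent(v)$ is well-defined and satisfies $\parent(v) \neq v$. Viewed in the auxiliary directed graph $H$, this means $v$ has at least one outgoing edge, so the weakly connected component of $H$ that contains $v$ also contains at least one other vertex, namely $\parent(v)$. In particular, every weakly connected component of $H$ restricted to $C$ has size at least $2$.

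Next I would translate this into a size bound. When TreeContraction contracts $C$ with respect to the equivalence relation $r$, each new node corresponds to a weakly connected component of $H \cap C$ and therefore absorbs at least two vertices of $C$. Hence the number of nodes in $C$ after the phase is at most $|C|/2$. Applying this observation independently to every connected component (the contraction respects components, since $\parent(v)$ is a neighbor and so lies in the same component), the total number of vertices in components of size $\geq 2$ drops by a factor of at least $2$ each phase.

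Finally, after $\lceil \log_2 n \rceil$ phases, every original connected component must have been compressed into a single node, at which point no edges remain and the algorithm terminates. There is no real obstacle here: the only subtlety worth flagging is ensuring that $\parent(v)$ exists, which is exactly the reason we restrict attention to components that still have $\geq 2$ vertices (and thus are not yet terminated). Because the halving is worst-case, the bound is deterministic rather than probabilistic.
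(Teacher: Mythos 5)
Your proposal is correct and follows essentially the same argument as the paper: since $\parent(v) \ne v$, every contracted cluster contains at least two vertices, so the vertex count halves each phase and the bound is a deterministic $\Oh(\log n)$. The extra care you take about isolated components and the existence of $\parent(v)$ is a reasonable elaboration but does not change the route.
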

\begin{full-version}
\begin{proof}
Since $f_\rho(v) \ne v$, each cluster consists of at least two vertices
and the number of vertices drops by at least a factor of two in each phase.
Therefore, the worst case number of contraction phases is $\Oh(\log n)$.
\end{proof}
\end{full-version}

We also need to handle each phase efficiently.
In order to do so, we analyze the structure of paths induced by the mapping $f_\rho$.
In particular, we show that all paths stabilize after relatively few steps.

\begin{lemma}\label{lem:cycle}
Let $\rho : V \rightarrow [n]$ be a one-to-one function.
Let $\parent(v)$ be the vertex with the smallest value of $\rho$ among $N(v) \setminus \{v\}$.
Then, for each $v$ there exists an integer $d(v)$, such that for each $i \geq d(v)$, $\parent^{i}(v) = \parent^{i+2}(v)$.\footnote{Here, $\parent^{i}$ denotes the $i$-th functional power of $\parent$.}
\end{lemma}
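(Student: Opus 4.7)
The approach is to view $\parent$ as defining a functional graph on $V(G)$: each vertex has out-degree one, so for any $v$, the iterates $\parent^i(v)$ must eventually fall into a unique cycle. Define $d(v)$ to be the first index at which $\parent^i(v)$ lies on this cycle. Then the claimed identity $\parent^i(v) = \parent^{i+2}(v)$ for all $i \geq d(v)$ is equivalent to asserting that the cycle reached from $v$ has length dividing $2$, so the whole lemma reduces to showing that every cycle of $\parent$ has length $1$ or $2$.

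Length $1$ is immediately impossible, because $\parent(v) \in N(v) \setminus \{v\}$ excludes fixed points. The main step, and the only nonroutine one, is ruling out length $k \geq 3$. I would argue by contradiction: pick such a cycle $u_0 \to u_1 \to \cdots \to u_{k-1} \to u_0$ with $\parent(u_i) = u_{(i+1) \bmod k}$, and let $u_j$ be its unique $\rho$-minimizer (unique because $\rho$ is injective). Since $\parent(u_j) = u_{j+1}$, the graph contains the edge $u_j u_{j+1}$, which puts $u_j$ in $N(u_{j+1}) \setminus \{u_{j+1}\}$. The definition of $\parent(u_{j+1}) = u_{j+2}$ as the $\rho$-smallest element of that set then yields $\rho(u_{j+2}) \leq \rho(u_j)$. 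For $k \geq 3$, however, $u_{j+2} \neq u_j$, so by the strict minimality of $u_j$'s priority on the cycle we must have $\rho(u_{j+2}) > \rho(u_j)$, a contradiction.

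With cycle length $\leq 2$ established, two applications of $\parent$ return any cycle vertex to itself, giving $\parent^{i+2}(v) = \parent^i(v)$ for every $i \geq d(v)$. The subtlety worth flagging is that the contradiction above genuinely requires $k \geq 3$: for $k = 2$ the equality $u_{j+2} = u_j$ makes $\rho(u_{j+2}) \leq \rho(u_j)$ an equality rather than a contradiction, which is exactly why length-$2$ cycles survive and cannot be further excluded.
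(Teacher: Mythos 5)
Your proof is correct and follows essentially the same route as the paper's: both arguments locate the $\rho$-minimal vertex $r$ on the eventual cycle and use the fact that $r$ is a neighbor of $\parent(r)$ to force $\rho(\parent^2(r)) \leq \rho(r)$, hence $\parent^2(r) = r$ (the paper states this directly, you phrase it as a contradiction for cycle length $\geq 3$). The two presentations are interchangeable.
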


\begin{full-version}
\begin{proof}
Since $\parent$ is defined on a finite domain, there exists $d(v)$ such that the sequence $\parent^{d(v)}(v), \parent^{d(v)+1}(v), \parent^{d(v)+2}(v), \ldots$ is periodic.
Let $r$ be the vertex from this sequence with the smallest value of $\rho(r)$.
We show that $\parent(\parent(r)) = r$.
Clearly, $\rho(\parent(\parent(r))) \geq \rho(r)$ from the choice of $r$.
On the other hand, $r$ is a neighbor of $\parent(r)$, so $\rho(\parent(\parent(r))) \leq \rho(r)$.
Hence $\rho(\parent(\parent(r))) = \rho(r)$, which implies $\parent(\parent(r)) = r$.
This implies $\parent^{i}(v) = \parent^{i+2}(v)$ for each $i \geq d(v)$.
\end{proof}
\end{full-version}

\begin{lemma}\label{lem:depth}
Let us define $\rho$, $\parent$ and $d$ as in the statement of Lemma~\ref{lem:cycle}.
Moreover, assume that $\rho$ is chosen uniformly at random from all one-to-one functions mapping $V$ to $[n]$.
Then, $\max_{v\in V} d(v) = \Oh(\log n)$ with high probability.
\end{lemma}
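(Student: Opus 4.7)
My plan is to bound, for a fixed starting vertex $v$, the probability that the walk $v = v_0, v_1 = \parent(v_0), v_2 = \parent(v_1), \dots$ fails to reach its terminal $2$-cycle within $\Oh(\log n)$ steps, and then union-bound over the $n$ choices of $v$. From the proof of Lemma~\ref{lem:cycle} we already know $\ord(v_{i+2}) \leq \ord(v_i)$ always, with strict inequality precisely when the walk has not yet stabilized.

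The key idea is a geometric halving of the rank along the even-indexed positions. I claim that, conditional on the walk having reached $v_{2i+1}$ without stabilizing, $\ord(v_{2i+2}) \leq \ord(v_{2i})/2$ with probability at least $\half$. To see this, I would reveal the values of $\ord$ only as the walk demands them, in a delayed-exposure fashion. At the moment of computing $v_{2i+2} = \parent(v_{2i+1})$, the ranks of $v_{2i}$ and of all previously queried neighbors are fixed, while the remaining neighbors of $v_{2i+1}$ are fresh vertices whose ranks are uniformly distributed over the still-unused rank slots. ``Non-stabilization at step $2i+1$'' means that at least one such fresh neighbor has rank strictly below $r := \ord(v_{2i})$; conditional on this event, a direct calculation on the minimum of a uniform sample below $r$ shows it additionally lies below $r/2$ with probability at least $\half$ (the worst case being a single fresh neighbor, where the conditional minimum is essentially uniform on $\{1,\dots,r-1\}$).

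Granting this halving claim, after $k$ non-stabilized even steps the number of successful halvings stochastically dominates $\mathrm{Binomial}(k, \half)$. Taking $k = C \log n$ for a sufficiently large constant $C$, a Chernoff bound ensures at least $\log_2 n$ halvings except with probability $\leq n^{-3}$. But $\log_2 n$ halvings would force $\ord(v_{2k}) \leq \ord(v_0)/2^{\log_2 n} \leq 1$, making $v_{2k}$ the globally smallest-rank vertex and hence a local $\ord$-minimum; such a vertex and its $\parent$ form a $2$-cycle, contradicting the assumption that the walk has not stabilized. A union bound over the $n$ starting vertices then gives $\max_v d(v) = \Oh(\log n)$ with high probability.

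The main obstacle is rigorously justifying the halving step: conditioning on ``the walk reached $v_{2i+1}$'' imposes a web of one-sided rank comparisons among already-revealed vertices, and one must verify that this conditioning does not distort the distribution of the still-unrevealed ranks unfavorably. The delayed-exposure framing is exactly what makes this transparent, since the unrevealed ranks remain a uniform injection onto the unused slots; the corner case in which $v_{2i+1}$ has only one neighbor besides $v_{2i}$ is in fact favorable, because then stabilization is forced rather than the halving failing.
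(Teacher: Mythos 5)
You correctly identify the paper's key quantity and overall skeleton: follow the walk $v, \parent(v), \parent^2(v), \dots$, show that the rank $\ord(\parent^{2i}(v))$ decays geometrically in $i$, and union-bound over the $n$ starting vertices. The paper executes this by showing the rank at least halves \emph{in expectation} per double-step (a neighbor of $\parent(v)$ conditioned to have rank below $\ord(v)$ has expected rank at most $\ord(v)/2$, combined with $\E{\min(X_1,\dots,X_k)}\le\E{X_1}$), iterating to $\E{\ord(\parent^{2i}(v))} < \ord(v)\cdot 2^{-i}$, and finishing with Markov's inequality and a union bound; you instead assert halving \emph{with conditional probability at least $\half$} at every non-stabilized step and apply a Chernoff bound.

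That per-step claim is where your argument has a genuine hole. Under delayed exposure, a fresh neighbor $u$ of $v_{2i+1}$ conditioned on $\ord(u) < r := \ord(v_{2i})$ is uniform over the \emph{unused} rank slots below $r$, not over $\{1,\dots,r-1\}$. By the time you process $v_{2i+1}$ you have already revealed the ranks of all neighbors of $v_0,\dots,v_{2i}$, and the revealed neighbors of the even-indexed walk vertices may occupy arbitrarily many slots in the interval $[\ord(v_{2i+1}), r)$; if they crowd $[\ord(v_{2i+1}), r/2]$, the unused slots below $r$ are concentrated in $(r/2, r)$ and the conditional probability that the minimum fresh rank falls below $r/2$ can be far smaller than $\half$. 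So your ``worst case is a single fresh neighbor, essentially uniform on $\{1,\dots,r-1\}$'' step is false for some histories, and the claimed stochastic domination by $\mathrm{Binomial}(k,\half)$ does not follow. To repair this you would either have to show that such adversarial histories are themselves unlikely, or replace the rank by a potential that is insensitive to which slots are already occupied (for instance, the number of unused slots below the current rank), or fall back on the paper's recursion on the expected rank, which needs only a one-step expectation bound rather than a probability bound holding uniformly over all histories.
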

\begin{full-version}
\begin{proof}
Let us define $p(v, i) := \ord(\parent^i(v))$ if $i \leq d(v)$ and $p(v, i) = 0$ for $i > d(v)$.
In particular $p(v, 0) = \rho(v)$.

We now show that for each $v \in G$ we have $\E{p(v, 2)} < p(v, 0)/2$.
If $\parent^2(v) = v$, this is obvious, as $p(v, 2) = 0$.
Thus, let us consider the opposite case.
	\begin{align*}
		\E{\ord(x) \mid x = \parent^2(v), x \neq v} & = \E{\ord(x) \mid x = \argmin_{u\in N(\parent(v)) \setminus \{\parent(v), v\}} \ord(u)}\\
		& = \E{\min\{\ord(u) \mid u \in N(\parent(v)), u \neq \parent(v), u \neq v\}}\\
		& = \E{\min\{\ord(u) \mid u \in N(\parent(v)), u \neq \parent(v), \rho(u) < \rho(v)\}}.
	\end{align*}

Fix $u'$ such that $u' \in N(\parent(v)), u' \neq \parent(v)$ and $\ord(u') < \ord(v)$.
Then $\E{\rho(u')}$ is a uniformly random number from $\{1, \ldots, \rho(v)-1\}$, which implies $\E{\rho(u')} \leq \rho(v)/2$.
	By applying an obvious formula $\E{\min(X_1, X_2, \ldots, X_k)} \leq \E{X_1}$, we get:
	\begin{align*}
		\E{\min\{\ord(u) \mid u \in N(\parent(v)), u \neq \parent(v), \ord(u) < \ord(v)\}} \leq \E{\rho(u')} \leq \ord(v) / 2 = p(v, 0) / 2.
	\end{align*}
By applying the above property, it follows that $\E{p(v, 2\cdot i)} < p(v, 0)\cdot 2^{-i}$ for each $v \in G$ and $i \geq 0$. This in turn implies that $\E{p(v, 2\cdot(2+c)\cdot \log n)} < p(v, 0) \cdot 2^{-(2+c)\cdot\log n} \leq n^{-(1+c)}$.

Hence, $\pr{d(v) \geq 2\cdot (2+c)\cdot\log n} = \pr{p(v,(2+c)\cdot\log n) \geq 1} \leq n^{-(1+c)}$, where the last step follows from Markov's inequality.
By the union bound we get that $\pr{\max d(v) \geq 2\cdot (2+c)\cdot\log n} \leq n^{-c}$.
\end{proof}
\end{full-version}

\begin{lemma}\label{lem:wccs}
Let us define $\rho$, $\parent$ and $d$ as in the statement of Lemma~\ref{lem:cycle}.
Let $H$ be the directed graph defined by the mapping $\parent$.
	Then, two vertices $u$ and $w$ belong to the same weakly connected component of $H$ if and only if $\{\parent^{d(u)}(u), \parent^{d(u)+1}(u)\} = \{\parent^{d(w)}(w), \parent^{d(w)+1}(w)\}$.
\end{lemma}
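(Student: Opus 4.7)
My plan is to introduce, for each vertex $v$, the \emph{limit set} $L(v) := \{\parent^{d(v)}(v),\parent^{d(v)+1}(v)\}$, and to show that two vertices lie in the same weakly connected component of $H$ if and only if they have the same limit set. First, I would observe that $L(v)$ really is a $2$-element set: by Lemma~\ref{lem:cycle} its two elements alternate under $\parent$, and since by construction $\parent(x) \in N(x)\setminus\{x\}$ there are no fixed points, so the two elements cannot coincide. Hence $L(v)$ is a directed $2$-cycle of $\parent$ in $H$.

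The crux of the argument is the invariance $L(v) = L(\parent(v))$ for every $v$. This follows by taking any $i \geq \max\{d(v),d(\parent(v))\}$ and noting that, by Lemma~\ref{lem:cycle},
\[
L(v) = \{\parent^{i}(v),\parent^{i+1}(v)\} = \{\parent^{i+1}(v),\parent^{i+2}(v)\} = \{\parent^{i}(\parent(v)),\parent^{i+1}(\parent(v))\} = L(\parent(v)).
\]
In particular $L$ is constant along every directed edge of $H$, and therefore also along every \emph{undirected} edge of $H$.

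From this invariance the lemma follows in two short steps. For the $(\Rightarrow)$ direction, take any undirected path connecting $u$ and $w$ in $H$; applying the invariance edge by edge yields $L(u)=L(w)$. For the $(\Leftarrow)$ direction, suppose $L(u)=L(w)=\{a,b\}$. Then $u$ is connected to $a$ by a directed path of length $d(u)$ in $H$, and $w$ is connected to some vertex of $\{a,b\}$ similarly; since $a$ and $b$ are joined by the edges of the $2$-cycle, the vertices $u$, $a$, $b$, $w$ all lie in one weakly connected component.

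I do not expect any real obstacle: the only nontrivial input is Lemma~\ref{lem:cycle}, and once one isolates the invariance $L(v)=L(\parent(v))$ the equivalence is essentially immediate. The one subtlety worth highlighting in the write-up is ruling out $\parent^{d(v)}(v)=\parent^{d(v)+1}(v)$, which is where the fixed-point-free property of $\parent$ is used.
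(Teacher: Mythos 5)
Your proof is correct. It differs from the paper's in how it handles the direction ``same weakly connected component $\Rightarrow$ same limit set'': the paper takes a simple undirected path $P$ between $u$ and $w$ and uses the fact that every vertex of $H$ has exactly one outgoing edge to argue that $P$ decomposes into at most two directed subpaths meeting at a common vertex $x$, whence $\parent^j(u) = \parent^l(w)$ for some $j,l$ and the two orbits share a common tail. You instead isolate the edge-local invariance $L(v) = L(\parent(v))$ and propagate it along an arbitrary undirected walk; this avoids the structural claim about simple paths in functional graphs and is slightly more robust, since it applies to walks directly and needs no ``at most one bend'' observation. The reverse direction is essentially identical in both proofs: the limit set is reachable from each vertex by a directed walk, so equal limit sets force membership in a common component. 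Your additional remark that $L(v)$ has exactly two elements because $\parent$ is fixed-point free is correct but not actually needed for the lemma, which only asserts equality of the sets.
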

\begin{full-version}
\begin{proof}
Clearly, if two vertices do not belong to the same weakly connected components of $H$, the corresponding sets are different.
It remains to show the converse.

By Lemma~\ref{lem:cycle}, each $i \geq d(v)$, $\parent^{i}(v) = \parent^{i+2}(v)$.
This means that $\{\parent^{d(u)}(u), \parent^{d(u)+1}(u)\}$ is the set of all vertices that are appear infinitely many times in the sequence $v, \parent(v), \parent^2(v), \ldots$.

Let $P$ be a simple path in $H$ connecting $u$ and $w$.
First assume that $P$ is a directed path, say, from $u$ to $w$.
Thus, for some $j$, $\parent^j(u) = w$ and the lemma follows.

In the remaining case $P$ is not a directed path.
From the fact that each vertex of $H$ has a single outgoing edge, there exists a vertex $x$ on $P$ such that paths from $u$ to $x$ and from $w$ to $x$ are directed.
This implies that for some $j$ and $l$ $\parent^j(u) = \parent^l(w)$, which completes the proof.
\end{proof}
\end{full-version}

\begin{theorem}\label{thm:tree-rounds}
TreeContraction can be implemented to run in $\Oh(\log n \log \log n)$ MapReduce rounds with high probability.
Moreover, in the MapReduce model with a distributed hash table TreeContraction can be implemented to run in $\Oh(\log n)$ rounds.
\end{theorem}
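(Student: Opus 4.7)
The plan is to combine the phase-count bound of Lemma~\ref{lem:tree-phases} with an efficient implementation of the contraction step inside each phase. Since there are $\Oh(\log n)$ phases, it suffices to bound the round cost of one phase. By Lemma~\ref{lem:wccs}, a phase reduces to computing, for every vertex $v$, the unordered pair $\{\parent^{d}(v),\parent^{d+1}(v)\}$ for some $d \geq d(v)$, and Lemma~\ref{lem:depth} supplies a single threshold $d = \Theta(\log n)$ that works simultaneously for all vertices w.h.p. Once every vertex is labelled by this pair, grouping vertices by the label realizes the contraction via the routine shuffle used for the contraction step itself.

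For the plain MPC(0) bound I would use pointer doubling. Initialize $g_0(v) := \parent(v)$, and in each subsequent round compute $g_k(v) := g_{k-1}(g_{k-1}(v))$ by a parallel lookup in which every vertex sends one query and receives one answer. Each doubling step is a shuffle of $\Oh(n)$ key--value records, executable in $\Oh(1)$ MPC(0) rounds exactly as the contraction step itself. After $\lceil\log_2 d\rceil = \Oh(\log\log n)$ rounds every vertex knows $\parent^{i}(v)$ for some $i \geq d$, and a constant number of additional rounds yields both elements of the length-$2$ cycle guaranteed by Lemma~\ref{lem:cycle}. Multiplying the $\Oh(\log\log n)$ per-phase cost by the $\Oh(\log n)$ phase bound gives the first statement.

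For the distributed-hash-table strengthening the idea is to replace the $\log\log n$-deep pointer-doubling tower by $\Oh(1)$ rounds of in-round chained lookups per phase. I would write the map $v \mapsto \parent(v)$ (of total size $n$) into the store, and in the next round have each vertex traverse its $\parent$-chain via sequential in-round queries until the 2-cycle of Lemma~\ref{lem:cycle} is detected; since in-round queries return immediately, vertex $v$ completes in a single round at the cost of $\Oh(d(v))$ queries. The main obstacle is respecting the $\Oh(n)$ per-round query budget: although the per-vertex depth is $\Oh(\log n)$ w.h.p., the aggregate $\sum_v d(v)$ can be as large as $\Oh(n \log n)$ in expectation, which naively would force $\Oh(\log n)$ rounds per phase. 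I would overcome this with a path-compression style caching — once a vertex has discovered its representative, the store entry of every vertex visited on the way is overwritten to point directly to that representative, so subsequent walks short-cut through the cache. A standard charging argument (each store entry is upgraded $\Oh(1)$ times amortized) then bounds the total number of queries per phase by $\Oh(n)$, so a constant number of rounds suffice per phase, yielding the claimed $\Oh(\log n)$ total.
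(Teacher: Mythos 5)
Your proposal follows the paper's proof essentially verbatim: $\Oh(\log n)$ phases from Lemma~\ref{lem:tree-phases}, per-phase labels obtained via Lemmas~\ref{lem:cycle} and~\ref{lem:wccs}, pointer doubling for $\Oh(\log\log n)$ rounds per phase using the depth bound of Lemma~\ref{lem:depth}, and in-round chained hash-table lookups of length $\Oh(d(v))$ for the $\Oh(\log n)$ bound. The only divergence is your path-compression device for staying within the $\Oh(n)$ per-round query budget --- the paper simply charges $\Oh(d(v))$ queries to each vertex and declares one round sufficient --- so that step is a refinement rather than a different route, though the amortized analysis of concurrent walks with in-round cache overwrites would need a more careful statement than a ``standard charging argument.''
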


\begin{proof}
Let us define $\rho$, $\parent$, $d$ and $H$ as in the statement of Lemma~\ref{lem:wccs}.
In order to implement TreeContraction, for each vertex we need to compute a label, such that two vertices have the same label if they belong to the same weakly connected component of $H$.
By Lemma~\ref{lem:wccs} it suffices to compute $\parent^{d(v)}(v)$ for each $v$.
By Lemma~\ref{lem:cycle} this can be computed once we know $\parent^j(v)$ for a single value $j \geq d(v)$.

Having an access to a distributed hash table one can compute $\parent^{d(v)}(v)$ in using $\Oh(d(v))$ queries to the hash table.
Moreover, $d(v) = \Oh(\log n)$ with high probability by Lemma~\ref{lem:depth}.
This means that computing the labels can be done in one round.
After that the graph can be contracted in a constant number of rounds. 

In order to compute representatives without the hash table we take advantage of the pointer jumping technique.
Consider a subroutine in which after the $i$-th step we have a mapping from each vertex $v$ to $\parent^{2^i}(v)$.
In the $i$-step $v$ queries $w = \parent^{2^i}(v)$ for $\parent^{2^i}(w)$ and computes $\parent^{2^{i+1}}(v) = \parent^{2^i}(w) = \parent^{2^i}(\parent^{2^i}(v))$.

The subroutine requires $\log\max_{v\in V} d(v)$ steps, which is $\Oh(\log\log n)$ w.h.p. due to Lemma~\ref{lem:depth}.
\end{proof}

\section{$\Oh(\log \log n)$ round convergence in random graphs}\label{sec:random}

In this section we give a $\Oh(\log\log n)$ upper bound on the number of rounds of LocalContraction on a certain class of random graphs.
We believe that this provides some explanation for the low number of rounds of the algorithm that we observe in practice.

Recall that probability distribution over graphs with $n$ vertices is called $G(n,p)$ is probability of each edge occurrence is $p$ and these events are independent~\cite{gilbert1959random}.
We relax this restrictive model and allow the occurrences of edges to be \emph{at least as likely} as in $G(n,p)$.

\begin{definition}
We say that a probability distribution $D$ over graphs with $n$ vertices belongs to class $\g{n}{p}$ if for all $u,v \in V(G)$ we have $X_{uv} \le X'_{uv}$ almost surely, where $X_{uv}$ is a binary random variable s.t. $X_{uv} = 1$ if and only if $uw \in E(G(n,p))$, and $X'_{uv}$ is the counterpart of $X_{uv}$ for $D$.
\end{definition}

We will use a notation $G \sim \g n p$ to indicate that $G$ is sampled from a distribution from the class $\g n p$.
Note that the $\g n p$ model is far less restrictive than $G(n, p)$, as adding any fixed set of edges to $G \sim \g n p$ yields a graph from $\g n p$.

We will be interested in distributions with $p > \frac{c\cdot\log n}{n}$ for some constant $c$.
Note that a graph sampled from such a distribution is connected with high probability
and for $p\sim \frac {\log n}{n}$ its diameter is asymptotically $\frac{\log n}{\log\log n}$~\cite{chung2001diameter}.
Note that~\cite{cc-beyond} considered a random graph model, but only in the regime when $p \geq n^{-\frac 34}$.
This is a much easier case, as such graphs have constant diameter with high probability.

For the sake of analysis we introduce an additional step in the algorithm.
Note that it can be implemented in $\Oh(1)$ MapReduce rounds.

\paragraph{MergeToLarge step}
This is a step that is executed at the end of each phase of LocalContraction.
It takes as input the contracted graph computed in this phase.
It is parameterized by sequence $(\alpha_i)$ depending on the graph density.
At the end of the $i$-th phase we detect large nodes, that is, those created by merging at least $\alpha_i$ vertices.
For each large node we compute its priority, which is the $\alpha_i$ largest hash of the vertices it contains (using the vertex hashes from phase $i$).
Our goal is to merge each node $v$ with a large node within a two-hop neighborhood of $v$.
If $v$ has at least one large one in distance at most 2, we merge it with the large node of largest priority.

\begin{lemma}\label{lem:ab}
Suppose $G \sim {\g n {\frac{\alpha_n}{n}}}$, where $\alpha_n \ge 4\ln n$.
Let $A = \{v \in V(G) \mid \ord(v) \le \frac{n}{\alpha_n}\},\, B = V(G) \setminus A$.
Then with probability $1 - \Oh(n^{-3})$:
\begin{enumerate}
\item the number of vertices $v \in B$ such that $N(v) \cap A \ne \emptyset$ is at least $\frac{n}{2}$, and
\item for all  $v \in A$ we have $N(v) \cap B \ne \emptyset$.
\end{enumerate}
\end{lemma}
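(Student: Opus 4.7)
The plan is to reduce the analysis to the classical independent-edge model $G(n, p)$ with $p = \alpha_n/n$ and then apply elementary tail bounds. Since the random ordering $\rho$ is independent of the edges of $G$, I would first condition on $\rho$; this freezes the sets $A$ and $B$ with $|A| = \lfloor n/\alpha_n \rfloor$ and $|B| \ge n(1 - 1/\alpha_n) \ge n/2$. Both items of the lemma are monotone non-decreasing events in the edge set of $G$ (adding edges can only create new neighbor relations), so by the coupling guaranteed in the definition of $\g{n}{\alpha_n/n}$ it suffices to prove them under $G(n, p)$, where all edges are independent Bernoulli$(p)$ variables.

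For item 2, fix $v \in A$. The event $N(v) \cap B = \emptyset$ requires all $|B|$ potential edges from $v$ into $B$ to be absent and therefore has probability at most $(1-p)^{|B|} \le \exp(-\alpha_n |B|/n) \le \exp(-(\alpha_n - 1)) \le e \cdot n^{-4}$, using $\alpha_n \ge 4\ln n$. A union bound over the at most $n$ vertices of $A$ then yields total failure probability $\Oh(n^{-3})$, as required.

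For item 1, define $Y_v := \mathbb{1}[N(v) \cap A = \emptyset]$ for $v \in B$. The key observation is that $Y_v$ depends only on the edges from $v$ to $A$, and these edge sets are pairwise disjoint across different $v \in B$; hence the $Y_v$ are mutually independent Bernoullis. Each satisfies $\E{Y_v} \le (1-p)^{|A|} \le e^{-\alpha_n |A|/n} \le e^{-1 + o(1)}$, so $\E{\sum_{v \in B} Y_v} \le (1/e + o(1))\,|B| \le 0.4\,n$, comfortably below $n/2$. A standard Chernoff or Hoeffding bound then gives $\pr{\sum_{v \in B} Y_v > n/2} \le \exp(-\Omega(n))$, which is $\Oh(n^{-3})$ with enormous slack. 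The only subtlety is the initial reduction to $G(n, p)$ — one must check that the two failure events really are monotone so that stochastic dominance under $\g{n}{\alpha_n/n}$ goes in the helpful direction — and the main mathematical point is recognizing the cross-vertex independence of the indicators $Y_v$ in item 1, without which only a weak Markov bound would be available.
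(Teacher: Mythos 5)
Your proof is correct and follows essentially the same route as the paper: the same per-vertex probability bounds $(1-p)^{|A|} < 1/e$ and $(1-p)^{|B|} \le e\cdot n^{-4}$, Hoeffding for item 1 and a union/Markov bound for item 2. You are in fact somewhat more explicit than the paper about the two points it glosses over — the monotone-coupling reduction from $\g{n}{\alpha_n/n}$ to $G(n,\alpha_n/n)$ and the mutual independence of the indicators $Y_v$ across $v\in B$ (disjoint edge sets into $A$) that licenses the concentration bound.
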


\begin{full-version}
\begin{proof}
To prove the first claim we fix $v\in B$ and bound

\begin{equation}
\pr{\ell_\rho(v) \not\in A} \le \prod_{u \in A} \pr{ uv \not\in E(G)} \le \left(1 - \frac{\alpha_n}{n}\right)^\frac{n}{\alpha_n} < 1/e.
\end{equation}

Hence, the expected number of vertices from $B$ having a neighbor in $A$ is at least $(1-1/e)\cdot|B| =  (1-1/e)n - o(n)$.
Since the events of edges occurrences are lower bounded by events from $G(n,p)$ so are events representing having a neighbor is a particular subgraph.
Therefore we can take advantage of Hoeffding's inequality to obtain that the at least $\frac n2$ vertices of $B$ have a neighbor in $A$ with probability at least  $1 - \exp(-\Omega(n))$.

Now we bound the probability of $v \in A$ having no neighbors in $B$:
\[
\left(1 - \frac{\alpha_n}{n}\right)^{n - \frac{n}{\alpha_n}} \le \left(1 - \frac{\alpha_n}{n}\right)^{\frac{n}{\alpha_n}(4\ln n - 1) } \le e/n^4.
\]

Here, the expected number of vertices from $A$ having no neighbor in $B$ is less than $e/n^3$.
Therefore the probability that there is at least one such vertex is most $e/n^3$ by Markov's inequality.
\end{proof}
\end{full-version}

\begin{lemma}\label{lem:bernoulli}
Consider a Bernoulli scheme with $n$ trials and probability of success in each trial at least $p$, where $np \le \frac{1}{2}$.
The probability of at least one success in such a scheme is at least $\frac{np}{2}$.
\end{lemma}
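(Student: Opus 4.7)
My plan is to reduce the problem to the i.i.d.\ case and then apply a Bonferroni-type lower bound on the probability of a union.

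First, I would observe that the probability of obtaining at least one success is monotonically nondecreasing in each individual success probability: if we couple the trials so that trial $i$ succeeds whenever an independent uniform $U_i \in [0,1]$ falls below its success probability, lowering any probability can only turn successes into failures. Hence it suffices to prove the bound in the worst case, namely when every trial has success probability exactly $p$, so that the probability of at least one success equals $1 - (1-p)^n$.

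Next, I would apply the Bonferroni inequality (equivalently, the degree-2 truncation of inclusion--exclusion) to the union $A_1 \cup \cdots \cup A_n$ of the independent success events. This gives
\[
1 - (1-p)^n \;\ge\; \sum_{i=1}^n \pr{A_i} - \sum_{i<j}\pr{A_i \cap A_j} \;=\; np - \binom{n}{2}p^2.
\]
Then I would factor out $np$ to obtain $np - \binom{n}{2}p^2 = np\bigl(1 - \tfrac{(n-1)p}{2}\bigr) \ge np\bigl(1 - \tfrac{np}{2}\bigr)$.

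Finally, using the hypothesis $np \le \tfrac{1}{2}$, the parenthetical factor is at least $1 - \tfrac{1}{4} = \tfrac{3}{4}$, so $1 - (1-p)^n \ge \tfrac{3}{4}\,np \ge \tfrac{np}{2}$, as required. There is no real obstacle here: the only subtlety is that the statement allows heterogeneous success probabilities bounded below by $p$, which is dispatched by the monotone coupling in the first step; the rest is the standard two-term inclusion--exclusion lower bound combined with the assumed smallness of $np$.
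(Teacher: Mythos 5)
Your proof is correct, and it shares the paper's first step (the monotone coupling that reduces to $n$ independent trials with success probability exactly $p$) but then diverges in the core estimate. The paper proceeds by induction on the number of trials, maintaining a two-sided invariant $\frac{kp}{2} \le \pr{A_k} \le \frac{1}{2}$ after $k$ trials: the upper bound $\pr{A_k} \le kp \le np \le \frac{1}{2}$ is what makes the inductive step $\pr{A_{k+1}} = \pr{A_k} + \pr{\neg A_k}\cdot p \ge \pr{A_k} + \frac{p}{2}$ go through. You instead write the probability in closed form as $1-(1-p)^n$ and apply the degree-2 Bonferroni lower bound, getting $np - \binom{n}{2}p^2 \ge np\left(1 - \frac{np}{2}\right) \ge \frac{3}{4}np$. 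Both arguments are elementary and hinge on the same smallness hypothesis $np \le \frac{1}{2}$ (in the paper it keeps the failure probability $\pr{\neg A_k}$ at least $\frac{1}{2}$; in yours it controls the second-order term). Your route yields the marginally sharper constant $\frac{3}{4}$ and avoids induction; the paper's avoids invoking inclusion--exclusion and makes the role of the hypothesis slightly more transparent. Either is a complete proof of the stated bound.
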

\begin{full-version}
\begin{proof}
We begin with a coupling argument so that we can lower bound the probability of at least one success with an analogous probability for a Bernoulli scheme with a single success chance exactly $p$.
Then we proceed by induction claiming that after $k$ trials the probability of at least one success is between $\frac{kp}{2}$ and $\frac 1 2$ -- let use denote this event as $A_k$.
The induction thesis holds for $k=1$ because $p$ is at most $\frac 1 2$.
Consider the $(k+1)$-th trial.
We have $\pr{A_{k+1}} = \pr{A_{k}} + \pr{\neg A_{k}}\cdot p \ge \frac{kp}{2} + \frac p 2$.
On the other hand $\pr{A_{k+1}} \le (k+1)\cdot p \le np \le \frac 1 2$, what proves the induction thesis.
\end{proof}
\end{full-version}

\begin{lemma}\label{lem:random}
Suppose $G \sim {\g n {\frac{\alpha_n}{n}}}$ where $\alpha_n \ge (4+4c)\cdot\ln n$ and $\alpha_n = o(n^\frac{1}{3})$.
Let $H$ be the graph created after a single round of LocalContraction with MergeToLarge step with parameter $\frac{\alpha_n}{4}$.
Then with high probability $|V(H)| \le \frac{n}{\alpha_n}$ and $H \sim {\g m {\Omega(\frac{\alpha^2_n}{m})}}$
under condition $|V(H)|  = m$.
\end{lemma}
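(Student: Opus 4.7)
My proof plan splits along the two assertions of the lemma.

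\textbf{Part 1 ($|V(H)| \le n/\alpha_n$).} The plan is to show that w.h.p.\ every $v \in V(G)$ has a 2-hop neighbor in $A$. Then $\ell_\rho(v) \in A$ for all $v$, so LocalContraction produces at most $|A| \le n/\alpha_n$ distinct labels, and MergeToLarge can only reduce this count. I will invoke Lemma~\ref{lem:ab} to get w.h.p.\ that the set $B' := \{v \in B : N(v) \cap A \ne \emptyset\}$ satisfies $|B'| \ge n/2$. Vertices in $A \cup B'$ have a 2-hop neighbor in $A$ trivially. For $v \in B \setminus B'$, by stochastic dominance of $\g n p$ over $G(n,p)$, the probability that $v$ has no neighbor in $B'$ is at most $(1 - \alpha_n/n)^{n/2} \le e^{-\alpha_n/2} = n^{-\Omega(1)}$ using $\alpha_n \ge (4+4c)\ln n$; a union bound completes the argument.

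\textbf{Part 2 ($H \mid (|V(H)|=m) \in \g m {\Omega(\alpha_n^2/m)}$).} First I will argue that w.h.p.\ every supernode in $H$ contains at least $\alpha_n/4$ original vertices. A supernode could be smaller than this only if some ``small'' LocalContraction cluster had no large cluster within 2 hops in the contracted graph; in the random setting the union of 2-hop neighborhoods of large clusters covers everything w.h.p.\ (an argument in the spirit of Part 1, using that large clusters carry a dominant fraction of the vertex mass).

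Next, for any pair of distinct supernodes $(U, V)$ with $|U|, |V| \ge \alpha_n/4$, stochastic dominance plus edge independence in $G(n,p)$ give $\pr{\text{no $G$-edge between } U, V} \le (1 - \alpha_n/n)^{|U||V|} \le \exp\!\bigl(-\alpha_n^3/(16n)\bigr)$. Since $\alpha_n = o(n^{1/3})$, this yields edge probability $\Omega(\alpha_n^3/n) = \Omega(\alpha_n^2/m)$ in the typical regime $m = \Theta(n/\alpha_n)$.

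The hardest step will be upgrading this marginal edge lower bound to the joint stochastic dominance required by the definition of $\g m q$: conditioning on the supernode structure correlates edge indicators of $G$ in a complex way, since the partition itself is a function of all edges. I plan to resolve this via an edge-revelation coupling --- sample $G$ by independent uniforms $r_{uv} \in [0,1]$, reveal only the variables needed to determine the LocalContraction labels and the MergeToLarge decisions, and argue that the residual uniforms between each pair of supernodes still provide independent Bernoulli contributions to the edges of $H$ with probability at least $q$. This step is where I expect the main technical work to lie.
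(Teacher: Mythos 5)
Your overall architecture matches the paper's: Part~1 is essentially the paper's own argument (it routes the two-hop reachability through a large subset of $B$ rather than directly through $A$, but this is cosmetic), and the skeleton of Part~2 --- every node of $H$ absorbs at least $\alpha_n/4$ vertices, a Bernoulli-type lower bound of $1-(1-\alpha_n/n)^{\alpha_n^2/16} = \Omega(\alpha_n^3/n)$ per pair of supernodes, and a deferred-decision coupling for independence --- is also the paper's plan. You have correctly located the hard step. (As an aside, the jump from $\Omega(\alpha_n^3/n)$ to $\Omega(\alpha_n^2/m)$ uses $m=\Theta(n/\alpha_n)$ while the lemma only gives $m\le n/\alpha_n$; the paper makes the same leap, so I do not count it against you.)

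However, your concrete plan for the independence step has a genuine gap. You propose to ``reveal only the variables needed to determine the LocalContraction labels and the MergeToLarge decisions'' and then use the residual uniforms between supernodes. But the label $\ell_\rho(v)=\min_{u\in N(N(v))}\ord(u)$ is a function of the entire two-hop neighborhood of $v$: every edge $uw$ with $u,w\in B$ can change $N(N(u))$ and hence $\ell_\rho(u)$, so pinning down the labels (equivalently, the partition into supernodes) forces you to reveal essentially all $B$--$B$ edges --- precisely the edges between supernodes that your coupling needs to keep unrevealed. There would be no residual uniforms left. The paper escapes this by \emph{never determining the exact partition}: it reveals only $E(A,B)$ (to get $B^*$ and hence, via a counting argument, a set $R\subseteq B^*$ of at least $n/4$ vertices lying in label classes containing $\ge\alpha_n/4$ vertices of $B$) and then $E(R,B\setminus R)$ (to certify $N(N(R))=V(G)$, so that MergeToLarge forces every node of $H$ to contain $\ge\alpha_n/4$ vertices of $R$), and it deliberately leaves the $R$--$R$ edges unexamined; these unexamined pairs, which are disjoint across distinct pairs of supernodes, supply the independent Bernoulli trials. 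To repair your proof you must (i) replace ``each supernode contains $\ge\alpha_n/4$ vertices'' by ``each supernode contains $\ge\alpha_n/4$ vertices of a designated set whose mutual edges were never queried,'' and (ii) prove the covering statement using only the queried edge sets, not the labels themselves.
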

\begin{proof}

Sampling $G$ from distribution of type ${\g n {\frac{\alpha_n}{n}}}$ and then choosing a random ordering is equivalent to fixing the ordering first and then examining existence of each edge independently.
Since we want to claim that the edges in $H$ are distributed (somewhat) independently, in one phase we are going to examine only a subset of edges of $G$.
We refer to the unexamined edges as \emph{tentative} and we do not make any assumptions on their existence.

Let $C_\ord(v)$ denote the equivalence class of vertex $v$, i.e., the set of vertices sharing the same label.
As in the statement of Lemma~\ref{lem:ab} we define $A = \{\ord(v) \le \frac{n}{\alpha_n} : v \in V(G)\}$ and $B = V(G) \setminus A$.
We examine edges from $E(A,B)$ and from now on assume that the conditions from Lemma~\ref{lem:ab} hold.
Let  $B^*$ indicate those vertices from $B$ that have a neighbor in $A$.
We have $\ell_\rho(v) \in A$ for $v\in B^*$, and, by Lemma~\ref{lem:ab}, $|B^*| \geq {n \over 2}$.
Let $R = \{v \in B^* \mid |C_\ord(v) \cap B| \ge {\alpha_n \over 4}\}$ -- we will also refer to these vertices as \emph{red}.
Observe that $|R| \ge \frac n 4$.
Otherwise for at least $n\over 4$ vertices $v\in B^*$ we would have $\ell_\rho(v) \in A$ and $|C_\ord(v) \cap B| < {\alpha_n \over 4}$.
There can be at most $|A| = \frac{n}{\alpha_n}$ such equivalence classes, therefore $|B^* \setminus R| < \frac n 4$,
a contradiction.

The nodes created by contracting red vertices are called red nodes.
A node is red iff at least $\alpha_n$ vertices $v$ with $\rho(v) > \frac n{\alpha_n}$ are merged when it is created.
This is equivalent to the fact that the $\alpha_n$-th largest vertex hash of these vertices is greater than $\frac n{\alpha_n}$.
As a result, from the definition of MergeToLarge step it follows that if a node's cluster is smaller than ${\alpha_n \over 4}$ and there is a red node at distance at most 2, they will get merged together.

In order to argue that all vertices end up sufficiently close to a red one we will show that with high probability $N(N(R)) = V(G)$.
and for this purpose we examine edges from $E(R, B \setminus R)$
(note that they are still tentative).
We have established that $|R| \ge \frac n 4$ therefore the probability of $v \in B \setminus R$ having no neighbors in $R$ is at most
\begin{equation*}
\left(1 - \frac{\alpha_n}{n}\right)^\frac{n}{4} \le e^\frac{-\alpha_n}{4} \le  e^{-(1+c)\cdot\ln n} = n^{-(1+c)}.
\end{equation*}
Thus $\E{|B \setminus N(R)|} \le n^{-c}$ and $\pr{|B \setminus N(R)| \ge 1} \le n^{-c}$ by Markov's inequality,
so with high probability all vertices in $B \setminus R$ have a neighbor in $R$.

It remains to handle vertices in $A$, that is, to show that $A \subseteq N(N(R))$.
By assumption we know that for each $v \in A$ there is $u \in N(v) \cap B$.
If $u$ belongs to $R$, then $v \in N(R)$
and if $u \in N(R)$, then $v \in N(N(R))$.
In the previous paragraph we have proven that
these are the only possible options.

All edges in $B \setminus R$ are connected to those in $R$ and those in $R$ are connected to $A$.
This means that all the labels are below $\frac{n}{\alpha_n}$
 so clearly $m = |V(H)|$ is at most $\frac{n}{\alpha_n}$.
In order to estimate to probability of an edge in $H$ being present,
we take advantage of the fact that the edges in $G[ R]$ are still tentative.
Consider a pair of nodes $x,y\in V(H)$.
By the definition of $R$ sets of vertices contracted to $x,y$ contain at least 
$\alpha_n \over 4$ vertices from $B$ each.
We lower bound the probability of $x,y$ being connected by the probability that at least one of these $\alpha^2_n \over 16$ pairs admit an edge.
By Lemma~\ref{lem:bernoulli}, it is $\Omega(\frac{\alpha^3_n}{n}) = \Omega(\frac{\alpha^2_n}{m})$ for $\alpha_n = o(n^\frac{1}{3})$.
There might be also other edges influencing the graph $H$, but in order to show that $H \sim \g m {\Omega(\frac{\alpha^2_n}{m}}$ it suffices to lower bound corresponding random variables with independent ones with the given probability of occurrences.
The independence of the new edge variables follows from the fact that we examine disjoint sets of edges in $G$.
\end{proof}

\begin{theorem}
Suppose each connected component of $G$ is sampled from $\g {n_i} p$, where $n_i \le n$.
Then after $\Oh(\log\log n)$ rounds of LocalContraction with MergeToLarge step each  component will get reduced to a single node with high probability.
\end{theorem}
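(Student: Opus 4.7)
The plan is to iterate Lemma~\ref{lem:random} on each connected component separately. Fix a component $G_0$ drawn from $\g{n_0}{p}$ with $n_0 \le n$, and write $\alpha_0 := n_0 p$ for its initial density parameter; the standing hypothesis $p = \Omega(\log n / n)$ ensures $\alpha_0 \ge (4+4c)\ln n_0$ whenever $n_0$ is $\Omega(\log n)$ (components smaller than this threshold reduce in $\Oh(1)$ phases by Theorem~\ref{thm:local-ub} and can be absorbed into the final bound). One phase of LocalContraction with MergeToLarge then produces a graph $G_1$ with $n_1 \le n_0/\alpha_0$ vertices that, conditioned on $n_1$, is distributed in $\g{n_1}{\Omega(\alpha_0^2/n_1)}$. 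Hence its new density parameter $\alpha_1 := n_1 p_1$ satisfies $\alpha_1 \ge c'\alpha_0^2$ for an absolute constant $c' > 0$, and the output again lies in a family to which Lemma~\ref{lem:random} applies, making the iteration legal.

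I would then analyze the resulting recurrence by taking logarithms. Setting $r_k = \log \alpha_k$ and $s_k = \log n_k$, the bounds above read $r_{k+1} \ge 2 r_k - \Oh(1)$ and $s_{k+1} \le s_k - r_k$. Starting from $r_0 \ge \log\log n - \Oh(1)$, the first recurrence forces $r_k \ge 2^k(\log\log n - \Oh(1))$, so $\log \alpha_k$ roughly doubles each round. Summing the geometric series, $\sum_{i<k} r_i$ exceeds $s_0 = \log n$ already at $k = \log\log n - \log\log\log n + \Oh(1) = \Oh(\log\log n)$, at which point $s_k \le 0$ and the component has been collapsed into a single node.

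The main obstacle will be the side hypothesis $\alpha_k = o(n_k^{1/3})$ in Lemma~\ref{lem:random}, which necessarily fails once $\alpha_k$ has been squared a few times. To handle this I would split the evolution into two regimes. While $\alpha_k \le n_k^{1/3}$ (the \emph{sparse} regime) Lemma~\ref{lem:random} applies verbatim and the recurrence above drives the process. Once this regime is left, $G_k$ stochastically dominates a Gilbert random graph with edge probability at least $n_k^{-2/3}$, whose diameter is at most $2$ with high probability. In such a graph the two-hop neighborhood of any vertex covers the entire component, so the next LocalContraction phase assigns the globally smallest priority as the label of every vertex and contracts the component to a single node. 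Because this transition occurs at most once and contributes only $\Oh(1)$ extra phases, the total round count remains $\Oh(\log\log n)$.

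Finally, Lemma~\ref{lem:random} is stated so that each phase inspects only a disjoint set of edges of the original random graph; this lets the distributional guarantees invoked in later phases be treated as independent of the failure events of earlier phases, and all of them can be combined by a union bound. Applying this union bound over the $\Oh(\log\log n)$ phases and then over the at most $n$ connected components yields the claimed high-probability guarantee for the entire input.
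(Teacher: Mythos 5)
Your proposal follows essentially the same route as the paper: iterate Lemma~\ref{lem:random} on each component to obtain the squaring recurrence $\alpha_{i+1} = \Omega(\alpha_i^2)$, conclude that $\alpha_k = n^{\Omega(1)}$ after $\Oh(\log\log n)$ phases, and finish with $\Oh(1)$ additional phases once the graph is dense. If anything you are more careful than the paper at the point where the side condition $\alpha_k = o(n_k^{1/3})$ breaks down (the paper simply asserts that each subsequent step shrinks the vertex count by a factor of $\alpha_{n,k}$, while you replace the lemma there by a diameter-$2$ argument), and your explicit union bound over phases and components makes the high-probability claim cleaner.
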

\begin{proof}
Note that the algorithm never needs to check the actual position of a node in the ordering and only compares it with other nodes.
This allows us to treat each connected component of $G$ separately and
we apply Lemma~\ref{lem:random} to each of them.

Let $\alpha_{n,i}$ indicate the parameter describing the component distribution after the $i$-th round, starting with $\alpha_{n,0} = \alpha_{n} = \Omega(\log _n)$.
We have $\alpha_{n,i+1} = \Omega(\alpha^2_{n,i})$ as long as $\alpha_{n,i} = o(n^{\frac{1}{3}})$ and
by induction we obtain that  $\alpha_{n,i} = \Omega(\log^{2^i}n)$.
Hence, after $k = \Omega(\log\log n)$ rounds, we get $\alpha_{n,k} = n^{\Omega(1)}$.
Each following step with high probability decreases the number of vertices by at least a factor of $\alpha_{n, k}$.
Thus, the algorithm terminates in $\Oh(1)$ rounds afterwards.
\end{proof}

\section{Empirical study}\label{sec:exp}

\begin{table}

\caption{\label{fig:graphs}Graphs used in the empirical study.}

\medskip
	\centering
	\begin{tabular}{cccccc}
	 \toprule
  & \bf Nodes & \bf Edges & \bf Largest CC\\
	\midrule
\bf Orkut & 3M & 117M & 3M \\
\bf Friendster & 65M & 1.8B & 65M \\
\bf Clueweb & 955M & 37B & 950M \\
\bf videos & 92B & 626B & 18B \\
\bf webpages & 854B & 6.5T & 7B \\
	 \bottomrule
\end{tabular}
\end{table}

In order to evaluate our algorithms empirically we have used five graphs described in Table~\ref{fig:graphs}.
Orkut and Friendster are social networks from SNAP collection~\cite{SNAP}.
Clueweb~\cite{clueweb1,clueweb2} is publicly available web-crawl graph.
The two non-public data sets are graphs representing pairs of similar entities in large collections of (a) video segments and (b) webpages.
Note that the webpages graph has more edges and over three times more vertices than the largest graph that has been previously used to test distributed connected components algorithms~\cite{wsdm18}.

We compare our algorithms (LocalContraction and TreeContraction) to three previously known algorithms: Cracker~\cite{cracker}, Two-Phase~\cite{cc-beyond} and Hash-To-Min~\cite{hash-to-min}.
These algorithms have been the state-of-the art MapReduce algorithms for computing connected components in recent years.
We implemented all algorithms ourselves in a MapReduce framework.

The implementations of two algorithms, TreeContraction and Two-Phase, additionally use a distributed hash table.
The implementation of Two-Phase that uses a distributed hash table seems to be most efficient  than in the experiments given in~\cite{cc-beyond}.
It allows to execute a sequence of large-star operations followed by a small-star operation in constant number of rounds and thus we count this whole sequence as one phase.

The Cracker algorithm in each phase modifies the edges of the graph and based on the new set of edges deactivates some of the vertices, excluding them from future phases.
We observe that Cracker is equivalent to the following algorithm.
Assume that each node is assigned a random priority.
First, rewire the edges of the graph just as in Hash-To-Min algorithm~\cite{hash-to-min}. Then, compute labels $\ell_p(v) = \min_{w \in N(v)}{\rho(w)}$ and merge together all vertices that have the same label.
This makes it easy to implement the algorithm in a similar way to our algorithms and minimizes the potential differences in the running times caused by using completely different implementations.

For algorithms that contract the graph (LocalContraction, TreeContraction and Cracker) we have introduced two optimizations.
If after some phase the contracted graph is small enough (e.g. has few tens of millions of edges), we send it to one machine that finds its connected components in a single round.
On this machine we use union-find algorithm to find connected components, as it can process incoming edges in a streaming fashion and only use space proportional to the number of vertices.
Note that this optimization does not apply to Two-Phase and Hash-To-Min, which do not modify the set of vertices in the graph.
Moreover, after each phase we can get rid of all isolated nodes from the contracted graph, as their connected component assignment is clear.

\begin{table}
\caption{\label{fig:phases}Numbers of phases used by each algorithm.}
\medskip\centering

	\begin{tabular}{cccccc}
	\toprule
 & \bf LocalContraction & \bf  TreeContraction & \bf Cracker & \bf Two-Phase & \bf Hash-To-Min\\
	\midrule
\bf Orkut & 2 & 2 & 2 & 3 & 6\\
\bf Friendster & 3 & 3 & 3 & 3 & 8\\
\bf Clueweb & 3 & 3 & 3 & 3 & x\\
		\bf videos & 5 & 4 & 4 & x & x\\
\bf webpages & 5 & 4 & 4 & x & x\\
	\bottomrule
	\end{tabular}
\end{table}

The experimental results are given in Tables~\ref{fig:phases} and~\ref{fig:times}.
When computing running times, we have taken a median from three runs.
The missing entries in both figures correspond to runs that have run out of memory or were timed out.
Due to large resource usage, we only completed one run of Cracker on webpages dataset, and thus we present an approximate result.

The number of phases in LocalContraction, TreeContraction, and Cracker are very similar and do not exceed five, even for graphs with billions of vertices.
However, Cracker is on average slower than LocalContraction.
We believe that this caused by performing more complicated graph transformations in each step.

\begin{table}

	\caption{\label{fig:times}Relative running times.}

\medskip\centering

 \begin{tabular}{cccccc}
	 \toprule
 & \bf LocalContraction & \bf TreeContraction & \bf Cracker & \bf Two-Phase & \bf Hash-To-Min\\
 \midrule
\bf Orkut & \bf 1.00 & 1.64 & 1.38 & 5.77 & 5.84\\
\bf Friendster & \bf 1.00 & 1.25 & 1.16 & 1.73 & 20.27\\
\bf Clueweb & 1.08 & \bf 1.00 & 2.87 & 1.92 & x\\
\bf videos & 1.03 & 1.08 & \bf 1.00 & x & x\\
\bf webpages & \bf 1.00 & 2.17 & \texttildelow 3 & x & x\\
\bottomrule
\end{tabular}

\end{table}

\begin{figure}
\centering
		\centering
		\subcaptionbox{Friendster}{
			\includegraphics[width=0.45\linewidth]{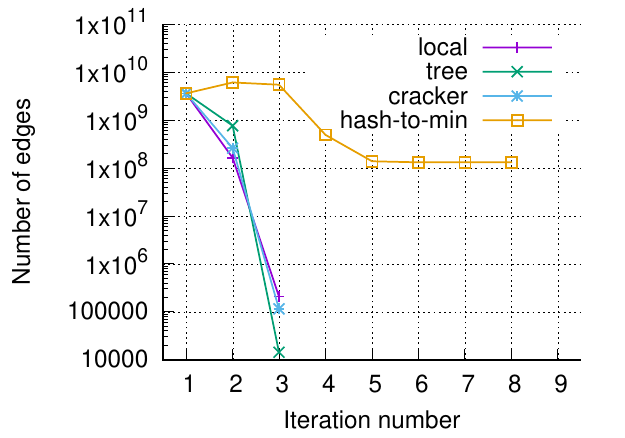}
			}
		\subcaptionbox{videos}{
			\includegraphics[width=0.45\linewidth]{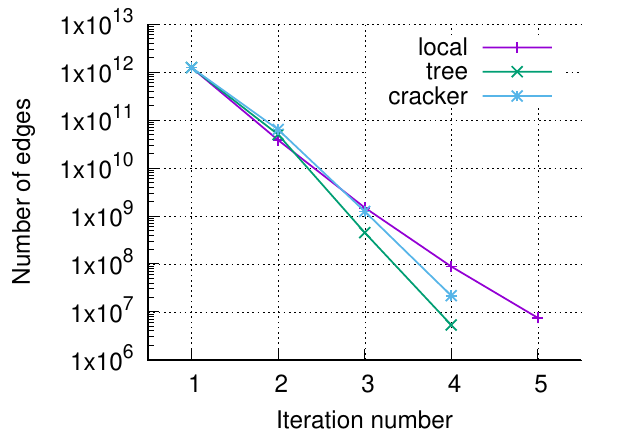}}

	\caption{\label{fig:edges} Numbers of edges at the beginning of each iteration.}
\end{figure}
Finally, Fig.~\ref{fig:edges} shows the effect of graph contraction, by showing the numbers of edges at the beginning of each phase for two datasets (for the other datasets the graphs would look similar). In every dataset and each phase of LocalContraction the number of edges decreases by a factor of at least 10.

\section{Lower bounds}\label{sec:lb}

The Hash-to-Min algorithm has been conjectured to run in $\Oh(\log d)$ rounds~\cite{hash-to-min} but
so far no parallel algorithm has been proved to terminate in $o(\log n)$ or $o(d)$ rounds for all graphs while keeping moderate communication.
One can achieve $\Oh(\log d)$ rounds with the  Hash-to-All algorithm~\cite{hash-to-min}, but it is burdened with a quadratic communication complexity.

In this section we show that none of the considered algorithms can be proved to work in $o(\log n)$ for all graphs.

\begin{theorem}
LocalContraction, Cracker, and Hash-To-Min require $\Omega(\log n)$ rounds on path of length~$n$.
\end{theorem}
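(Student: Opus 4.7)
The plan is to analyze each of the three algorithms on the path $P_n$ separately and, in each case, show that a single phase shrinks the effective instance by at most a constant factor, so that $\Omega(\log n)$ phases are necessary before the graph collapses to a single node or has no edges.

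For LocalContraction I would first observe that on a path, $N(N(v))$ is the $5$-vertex window $\{v-2,v-1,v,v+1,v+2\}$ around an interior vertex $v$. Hence the distinct labels $\ell_\rho$ produced in one phase are in bijection with the \emph{$2$-local minima} of $\rho$, i.e.\ the vertices that attain the minimum of $\rho$ in their own $5$-window, and two consecutive path-vertices are merged only when their $5$-windows have the same argmin. For a uniformly random $\rho$, an interior vertex is a $2$-local minimum with probability exactly $1/5$; because these indicator variables are $5$-dependent, a Chebyshev or Chernoff bound for $m$-dependent sums gives at least $n/6$ such minima with high probability. The contracted graph is then again a path with at least $n/6$ nodes, and iterating $k$ phases leaves a path of length at least $n/6^k$, which forces $k = \Omega(\log n)$.

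For Hash-To-Min and Cracker I would use an information-propagation argument. Define the \emph{reach} of a vertex $v$ to be the largest path-distance from $v$ to any vertex currently contained in its cluster $C_v$; initially $r_v \le 1$. The Hash-To-Min update rule sends $\min C_v$ to every member of $C_v$ and $C_v$ to $\min C_v$, so after one round the maximum reach satisfies $\max_v r_v' \le 2\max_v r_v + O(1)$. Since the algorithm can only terminate once the minimum-priority vertex of $P_n$ has reach at least $n/2$, at least $\log_2(n/2) - O(1) = \Omega(\log n)$ rounds are required. For Cracker I would invoke the equivalent formulation recalled in Section~\ref{sec:exp}: one phase first performs Hash-To-Min-style rewiring and then merges vertices whose one-hop labels $\ell(v) = \min_{w \in N(v)} \rho(w)$ agree. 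The reach argument bounds the rewiring step, and the same counting argument as for LocalContraction (now applied to \emph{$1$-local minima} of $\rho$ on $P_n$, which occur with frequency $\approx 1/3$) shows that the final merge step leaves $\Omega(n)$ un-merged classes, yielding $\Omega(\log n)$ phases.

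The main obstacle will be the reach-doubling argument for Hash-To-Min and the rewiring inside Cracker: after a few rounds the sets $C_v$ stop being sub-intervals of $P_n$, so one must choose a potential function (e.g.\ the maximum reach $\max_v r_v$, or the length of a longest ``still-alive'' subpath of $P_n$) and verify that it decreases by only a constant factor per phase even once the state becomes combinatorially complicated. For LocalContraction no such subtlety arises, and the $2$-local-minima count is essentially immediate; the overall theorem then follows from the three constant-factor-per-phase bounds.
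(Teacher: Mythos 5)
Your treatment of Hash-To-Min and Cracker is essentially the paper's argument: the paper observes that new edges arise only by concatenating two edges sharing an endpoint, so after $k$ rounds every edge of $v$ reaches at most distance $2^k$ along the original path, which is exactly your reach-doubling potential; the ``main obstacle'' you worry about (clusters ceasing to be intervals) is already absorbed by phrasing the invariant in terms of original path-distance of edge endpoints rather than in terms of the shape of the clusters. For LocalContraction, however, you take a genuinely different and unnecessarily heavy route. The paper's proof is deterministic and one line: since $\ell_\rho(v)$ is determined within $N(N(v))$, any two vertices receiving the same label lie at path-distance at most $4$, so each contracted node spans at most $5$ consecutive original vertices and the path shortens by at most a factor of $5$ per phase, for \emph{every} ordering $\rho$. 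Your probabilistic count of $2$-local minima proves the same constant-factor shrinkage but only with high probability, and it carries two soft spots you would need to patch: the claimed bijection between distinct labels and $2$-local minima is not quite right (a vertex $u$ can be the argmin of the $5$-window of some $v\neq u$ without being the argmin of its own window, so the labels form a \emph{superset} of the $2$-local minima --- fortunately this is the direction your lower bound needs), and the assertion that the contracted graph is again a path requires an argument that label classes are intervals before you may iterate. The deterministic locality bound sidesteps both issues and also avoids the union bound over $\Theta(\log n)$ phases. Net: your approach works, but the diameter/locality argument is the cleaner instrument for all three algorithms at once.
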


\begin{full-version}
\begin{proof}
In a single phase LocalContraction connects vertices at distance at most 4 so it can shorten the path at most 5 times in a phase.
Cracker and Hash-To-Min insert new edges to the graph, however they are constructed by concatenating edges sharing one common endpoint.
By induction one can see that after the $k$-th round the edges of $v$ are at distance at most $2^k$ from $v$.
\end{proof}
\end{full-version}

Lower bound for TreeContraction involves a more sophisticated argument and works only in a randomized setup.
This is inevitable because there is an ordering for which TreeContraction (with a distributed hash table) processes a path in a single round.

\begin{theorem}
TreeContraction requires $\Omega(\log n)$ phases to process a path of length~$n$ with high probability.
\end{theorem}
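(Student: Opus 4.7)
The plan is to show that a single phase of TreeContraction applied to a path of length $\ell$ produces a new graph that is itself a path of length at least $\ell/8$ with probability $1-\exp(-\Omega(\ell))$, and then to chain this estimate across $\Omega(\log n)$ successive phases.

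First I would verify the structural claim that the contraction of a path is again a path. Because the maximum degree is two, every edge of the digraph $H$ induced by $\parent$ lies between consecutive vertices of the original path, so every weakly connected component of $H$ is a contiguous block; the contracted graph therefore has these blocks as vertices, linked in the natural left-to-right order. Call an edge $e_i = (v_i,v_{i+1})$ of the path \emph{unused} if it appears in neither direction in $H$; the length of the new path is then the number of unused edges plus one.

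Next I would compute the unused-edge probability. For $2 \le i \le \ell-2$, the edge $e_i$ is unused iff $\parent(v_i) = v_{i-1}$ and $\parent(v_{i+1}) = v_{i+2}$, equivalently $\rho(v_{i-1}) < \rho(v_{i+1})$ and $\rho(v_{i+2}) < \rho(v_i)$. Because these two strict inequalities compare disjoint pairs of priorities, they are independent events of probability $\tfrac{1}{2}$ each, so $e_i$ is unused with probability exactly $\tfrac{1}{4}$, and the expected length of the new path is $(\ell+1)/4$.

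For the concentration step, let $X_i$ be the indicator that $e_i$ is unused. Since $X_i$ depends only on the four priorities $\rho(v_{i-1}),\rho(v_i),\rho(v_{i+1}),\rho(v_{i+2})$, we have $X_i$ independent of $X_j$ whenever $|i-j| \ge 4$. Partitioning the indices into four residue classes modulo $4$ gives four sums of i.i.d.\ $\{0,1\}$-variables to which a standard Chernoff bound applies within each class; combining them shows that $\sum_i X_i \ge \ell/8$ (hence the new path has length at least $\ell/8$) with failure probability $\exp(-\Omega(\ell))$.

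Finally I would iterate. Writing $\ell_k$ for the path length after $k$ phases, as long as $\ell_k \ge \log^2 n$ the one-phase bound gives $\ell_{k+1} \ge \ell_k/8$ with conditional failure probability at most $n^{-10}$, and a union bound over the $O(\log n)$ phases before the length drops below $\log^2 n$ ensures $\ell_k \ge n/8^k$ throughout with high probability. Consequently the algorithm cannot reduce the path to a single node until phase $\log_8 n - O(\log\log n) = \Omega(\log n)$. The main technical hurdle is the concentration step for the $4$-dependent indicators, which the residue-class trick resolves cleanly; everything else is a routine computation with random orderings on a path.
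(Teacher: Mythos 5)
Your proposal is correct and follows essentially the same strategy as the paper's proof: lower-bound the number of nodes of the contracted path by a sum of local, decorrelated indicators (the paper uses the event that the middle vertex of a disjoint $5$-segment is a local minimum, with success probability $\tfrac15$, whereas you count unused edges with probability $\tfrac14$ and handle the $4$-dependence via residue classes), apply a Chernoff/Hoeffding bound, and iterate the constant-factor shrinkage over $\Omega(\log n)$ phases. Both arguments are sound; yours is marginally tighter in the constants but not materially different.
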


\begin{full-version}
\begin{proof}
Consider a subpath of length 5.
With probability $\frac 1 5$ the vertex in the middle receives the lowest priority and becomes one of the two vertices that define a contracted weakly connected component.
Let us divide the path into $k = \floor{\frac{n}{5}} \ge \frac n 5 - 1$ segments and let $X_i$ be a binary random variable indicating whether the middle vertex of the $i$-th segment becomes a root.

We define $X = \sum_{i=1}^k X_i$.
Clearly $\E{X_i} = \frac 1 5$ and $\E{X} \ge \frac n {25} - \frac 1 5$.
By Hoeffding's inequality $\pr{X \le \frac n {26}} = \exp(-\Omega(n))$.
Since the class of paths is closed under taking contractions and $X$ is a lower bound on the length of the contracted path,
we can iterate this argument.
Hence, after $\log_{26} n$ rounds, the algorithm will be still running with high probability.
\end{proof}
\end{full-version}

\section{Acknowledgments}
We would like to thank Stefano Leonardi for helpful discussions, in particular on ideas that led to the proof of Theorem~\ref{thm:tree-rounds}.

\bibliographystyle{alpha}
\bibliography{cc}

\end{document}